\newenvironment{proof}{\textbf{Proof.~}}{\hfill$\Box$}
\newcommand{\ie}{i.e.~}
\newcommand{\eg}{e.g.,~}
\newcommand{\MATH}[1]{\ensuremath{#1}\xspace}
\newcommand{\MATHIT}[1]{\MATH{\mathit{#1}}}
\newcommand{\MATHBB}[1]{\MATH{\mathbb{#1}}}
\newcommand{\AXIOM}[1]{\MATH{\mathsf{#1}}}
\newcommand{\AxSameFuture}{\AXIOM{AxSameFuture}}
\newcommand{\AxCMV}{\AXIOM{AxCoMoving}}
\newcommand{\AxSTL}{\AXIOM{AxSTLMotion}}
\newcommand{\SPR}{\AXIOM{SPR}}
\newcommand{\THEORY}[1]{\AXIOM{#1}}
\newcommand{\HORIZONTAL}{\MATHIT{horizontal}}
\newcommand{\VERTICAL}{\MATHIT{vertical}}
\newcommand{\LEFT}{\MATHIT{left}}
\newcommand{\RIGHT}{\MATHIT{right}}
\newcommand{\BOTTOM}{\MATHIT{bottom}}
\newcommand{\TOP}{\MATHIT{top}}
\newcommand{\REAL}{\MATHBB{R}}
\newcommand{\SPACETIME}{\MATH{\REAL^1_1}}
\newcommand{\nSPACETIME}{\MATH{\REAL^n_1}}
\newcommand{\tSPACETIME}{\MATH{\REAL^3_1}}
\newcommand{\QED}{\MATH{\hfill\Box}}
\newtheorem{theorem}{Theorem}
\begin{document}

\title{Faster than light motion does not imply time travel}

\author{H. Andr\'eka$^1$, J. X. Madar\'asz$^1$, I. N\'emeti$^1$, M. Stannett$^2$ and G. Sz\'ekely$^1$\\[12pt]
   $^1$Alfr\'ed R\'enyi Institute of Mathematics, Hungarian Academy of Sciences,\\
   Re\'altanoda utca 13-15, H-1053, Budapest, Hungary.\\
   \small{\{\texttt{andreka.hajnal}, \texttt{madarasz.judit}, \texttt{nemeti.istvan}, \texttt{szekely.gergely}\}$@$\texttt{renyi.mta.hu}}\\[12pt]
   $^2$Department of Computer Science, University of Sheffield,\\
   Regent Court, 211 Portobello, Sheffield S1~4DP, UK.\\
   \small{\texttt{m.stannett$@$sheffield.ac.uk}}
}

\date{}



\maketitle

\begin{abstract}
Seeing the many examples in the literature of causality violations based on faster-than-light
(FTL) signals one naturally thinks that FTL motion leads inevitably to the
possibility of time travel. We show that this logical inference is invalid by demonstrating a
model, based on (3+1)-dimensional Minkowski spacetime, in which FTL motion is permitted (in every direction without any limitation on speed) yet which
does not admit time travel. Moreover, the Principle of Relativity is true in this model in the sense
that all observers are equivalent. In short, FTL motion does not imply time travel after all.
\end{abstract}



\section{Introduction}
\label{sec:introduction}

The idea that faster-than-light (FTL) motion leads to causality violations goes back at least as far as Einstein \cite{Ein07} and Tolman \cite[pp.~54--55]{Tol17}, although Recami \cite{Recami87} has pointed out that most of the related paradoxes do not involve a valid sequence of causal influences because some of the observers involved disagree as to whether a given event represents the emission or receipt of a signal. Nonetheless, Newton has described a convincing scenario in which two observers $o_1$ and $o_2$ can send FTL signals $s_1, s_2$ to one another, they agree on the temporal order of the times of sending and receiving these two signals, yet each of them receives a reply to their signal before it is sent (see Figure \ref{ccc}). Encoding messages within these signals (assuming this arrangement to be valid) would therefore enable the observers to create a logically paradoxical formation. These FTL-based paradoxes can in principle be resolved by analogy with similar paradoxes proposed in relation to closed timelike curves (CTCs), \eg by appealing to Novikov's self-consistency principle \cite{Nov90,Nov92}, but nonetheless it is natural to ask whether sending information back to the past (which we will simply call `time travel')  must be possible if particles can move faster than light relative to one another.
\begin{figure}[h!]
\centering
\includegraphics*{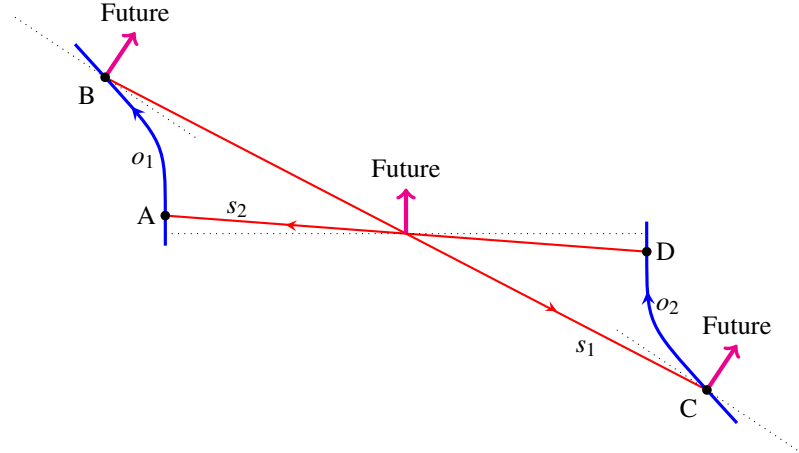}
\caption{Newton's example  of a ``closed causal cycle'' \cite{New70} generated using FTL signals. At event $B$ observer $o_1$ sends out FTL signal $s_1$. This is received by observer $o_2$ at event $C$, who later sends out FTL signal $s_2$. This is received by $o_1$ at event $A$, which is earlier than $B$ from $o_1$'s point of view. The dashed lines illustrate the simultaneities of $o_1$ and $o_2$ where they send and receive these signals.}
\label{ccc}
\end{figure}

The `time travel' capabilities suggested by scenarios like that in Newton's example give force to several results at the interface between physics and computation. For example, Deutsch's seminal quantum computational analysis of CTCs \cite{Deu91} is based on their use as a mechanism for time travel, and his analysis applies (with some restrictions) to any system in which negative time-delay components can be implemented for use in otherwise standard circuits. Likewise, several remarkable complexity theoretic results based on various formulations of CTC computation become relevant \cite{Bru03,BW12,Llo11}, as do related classical formulations of phenomena like wormholes \cite{Nov92} that are of considerable interest in classical computability theory \cite{PC10,Sta06,Sta12}. Akl, for example, has re-opened the debate on universal computation by showing that even when equipped with the capacity for time travel there are basic problems that cannot be solved using a pre-programmed machine \cite{Akl10}. At the other extreme, various authors have shown how the existence of wormholes, CTCs, Malament-Hogarth spacetimes and the like can boost computational power to the extent that formally undecidable problems become solvable \cite{ANS12,EN02,Hog04}. Conversely, computational considerations can be used to indicate certain information theoretic features of CTCs, including both their likely ability to act as information-storage devices and their limitations in this regard \cite{Sta13}.

These results cannot be reduced to the existence of FTL particles, however, because as we demonstrate here, a logical analysis of ($n$+1)-dimensional Minkowski spacetime shows -- subject to a small number of reasonable assumptions identified below -- that, even if we allow the existence of interacting observers/particles traveling FTL relative to one another, this is not in itself enough to entail the possibility of time travel.
We first show in (1+1)-dimensional spacetime (Theorem~\ref{thm:main}) that time travel is possible using FTL observers/particles (in the sense of sending information back to their own pasts) only if it is also possible without using them. The proof of Theorem~\ref{thm:main} gives us an extension of the standard model of special relativity with FTL observers on all spacelike lines, in which Einstein's Principle of Relativity holds but time travel is not possible. We then generalize this model to include each ($n$+1)-dimensional spacetime ($n = 1, 2, 3, \dots$), thereby showing (Theorem~\ref{thm:mainplus}) that FTL motion does not \textit{in itself} introduce the possibility of time travel after all. This does not invalidate the results cited above; it merely rules out the existence of FTL motion as a \emph{logically sufficient} mechanism for explaining or implementing them in practice.

We have tried to keep the following discussion rather informal, but it should be emphasized that our results can also be derived in the context of the first-order axiomatic theory \THEORY{AccRel} of special relativity with accelerating observers; for formal discussions of \THEORY{AccRel} and the related theory \THEORY{SpecRel} of inertial observers, see, \eg \cite{AMN02,MNS06,AMNSz12,SN12}.

\section{FTL motion without time travel}
\label{sec:ftl-travel-without-time-travel}

We begin by working in (1+1)-dimensional Minkowski spacetime. We enrich this spacetime with a large number of FTL observers (each non-light-like line will be a worldline) in such a way that time travel is still not possible. Then we extend this model to ($n$+1)-dimensional Minkowski spacetime, for all $n \geq 1$.

\subsection{The (1+1)-dimensional case}
\label{sec:1+1} 

We focus initially on (1+1)-dimensional Minkowski spacetime, \SPACETIME, where for the sake of argument we will assume \emph{a priori} that it is possible for observers with reference frames attached to them to move FTL relative to one another. We can make this assumption since it is both well-known and indeed easy to prove that FTL motion of observers is compatible with special relativity in \SPACETIME, because there is no real geometric (and hence no physically relevant) distinction between the `inside' and `outside' of a lightcone in this setting. For a formal construction, see, \eg \cite[\S~2.7]{AMN02}.

As usual, we will think of \SPACETIME in terms of a spacetime diagram (coordinatized initially by the reference frame of some inertial observer, $o$, whose identity need not concern us), and note that any lightcone divides \SPACETIME into 4 quadrants, which we will refer to informally in the obvious way using the terms \LEFT, \RIGHT, \TOP and \BOTTOM. Since we want to compare how different observers experience the direction of time's arrow, we identify observers in terms of their worldlines. Technically, we regard each worldline as a smooth path in \SPACETIME (\ie a smooth function $w \colon \REAL \to \SPACETIME$ parametrized by arc length) swept out (as observed by $o$) by the associated observer as it moves continuously into (what it considers to be) its unfolding future.

We now impose the usual inertial approximation condition, where an inertial observer in \SPACETIME is one whose worldline is a straight line that is not tangential to a lightcone -- inertial observers neither accelerate, nor travel at light speed (though they might travel FTL). The following is an informal version of the corresponding formal axiom in \cite{MNS06}:
\begin{itemize}
\item 
  \AxCMV \\
  At each event along an observer's worldline, there exists precisely one co-moving inertial observer.
\end{itemize}
Consequently, not only does the slope of $w$'s tangent vary continuously as we move along $w$ (because $w$ is assumed to be smooth), but $w$ is never tangential to a lightcone. This also implies that no worldline can accelerate from below the speed of light to above the speed of light.

Suppose, then, that $e$ is an event on some worldline $w$. Since $w$'s tangent at $e$ is a straight line which is not tangential to the lightcone at $e$, it must either lie in the region \VERTICAL = $\TOP \cup \{e\} \cup \BOTTOM$ or else in the region \HORIZONTAL = $\LEFT \cup \{e\} \cup \RIGHT$. Supposing for the sake of argument that the tangent at $e$ lies in \VERTICAL, it will also lie in \VERTICAL at every other point along $w$, since otherwise -- by application of the Intermediate Value Theorem to the tangent's slope -- there would necessarily exist some event on $w$ where the motion is light-like. Having established that the tangent lies in \VERTICAL, the same reasoning then allows us to identify uniquely whether $w$ (viewed as an evolving trajectory) unfolds into \TOP or \BOTTOM, and once again the smoothness of motion ensures that this determination will be the same at all events along $w$.

It is therefore meaningful to identify one of the quadrants \LEFT, \RIGHT, \TOP or \BOTTOM as $w$'s \emph{future quadrant}, viz. the quadrant into which the $w$'s unfolding trajectory takes it at every event along its worldline. When we say that `time flows in the same direction'  for two observers, we mean simply that they have the same future quadrant. We call two observers \emph{slower than light} (STL) relative to each other if they are both in \VERTICAL or in \HORIZONTAL as defined above. They are called FTL otherwise. We say that time flows in the opposite direction for two observers if these two observers are STL relative to each other and time does not flow in the same direction for them. We note that these notions are all Lorentz-invariant (or, in other words, observer-independent) in the sense that if two observers are, \eg STL relative to each other, then they remain so after any Lorentz transformation.

We also make the following assumption, which follows from a natural generalization of axiom \AXIOM{Ax5^+} from \cite[p.~297]{AMN02}: 
\begin{itemize}
\item \AxSTL\\
For any observer $o_1$ and worldline $w$ which is STL with respect to $o_1$, there is an observer $o_2$ whose time flows in the same direction as $o_1$'s, and whose worldine and $w$ have the same range.
\end{itemize}
This axiom allows us to demonstrate formally that if two observers moving STL with respect to one another see time moving in opposite directions, we can trivially generate a time travel situation  (see figure \ref{fig-tt}, right hand side).

Suppose, then, that $o_1$ and $o_2$ are observers moving STL relative to $o$, and that the future quadrant of $o_1$ is \TOP (say). If $o_2$ moves STL relative to $o_1$, its own future quadrant will necessarily be either \TOP or \BOTTOM, depending on the direction in which it considers time to flow. Of course, if $o_2$'s future quadrant is \BOTTOM (\ie time for $o_2$ `flows backwards' as far as $o_1$ is concerned), we can trivially construct a time travel scenario by \AxSTL. Consequently, since we are interested in whether FTL motion entails the possibility of time travel in an otherwise well-behaved setting, we will assume that the future quadrant for $o_2$ is also \TOP. Analogous arguments apply for other choices of $o_1$'s future quadrant, so we will make the blanket assumption:
\begin{itemize}
\item 
  \AxSameFuture\\
  Whenever two observers travel slower than light relative to one another, they agree as to the direction of time's flow (they have the same future quadrant).
\end{itemize}

A simple argument now shows that FTL observers (with respect to $o$) also agree with one another as to the direction of time's arrow. For suppose that observers $o_1$ and $o_2$ are both traveling FTL relative to $o$. As before, we can assume that $o$'s future quadrant is \TOP, and since $o_1$ and $o_2$ are moving FTL relative to $o$, their own future quadrants must be either \LEFT or \RIGHT. They are therefore moving slower than light relative to one another, and \AxSameFuture applies.

Since the direction in which time flows is fixed along any given worldline, any manifestation of time travel or its associated causality violations must involve interactions between two or more observers. So we say that \emph{time  travel is possible} (Figure \ref{fig-tt}) to mean there is a sequence $o_1, o_2, \dots, o_n$ of (at least 2 distinct) observers, and a sequence $e_0, e_1, \dots, e_n$ of events, such that for each $1\le i\le n$,
\begin{itemize}
\item
   $o_1 \neq o_n$;
\item
   $e_0 = e_n$;
\item
   $e_{i-1}$ and $e_i$ are both events on $o_i$'s worldline;
\item
   $e_{i-1}$ chronologically precedes $e_i$ according to $o_i$.
\end{itemize}
We say that time travel can be \textit{implemented using FTL observers} if not all of the observers occurring in $o_1,\dots,o_n$ move STL relative to one another.
\begin{figure}
  \centering
  \includegraphics*{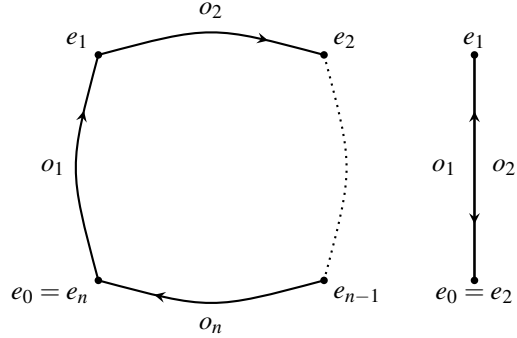}
  \caption{A general time travel scenario (left), and the particular case involving just two observers (right).}
  \label{fig-tt}
\end{figure}

We can now state and prove our first theorem. The fact that its conditions do not make Theorem~\ref{thm:main} vacuously true is explained in Remark 1 below.

\goodbreak
\begin{theorem}\label{thm:main}
Assume \AxCMV. Then in \SPACETIME,
\begin{itemize}
\item[(a)] 
  \label{item-nott} 
  \AxSameFuture implies that time travel is impossible; and hence,
\item[(b)] 
  \label{item-nocv} 
  if \AxSTL holds, time travel can be implemented using FTL observers only if it is already possible without them.
\end{itemize}
\end{theorem}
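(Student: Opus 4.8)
The plan is to prove (a) by finding a single coordinate that is strictly monotone along \emph{every} admissible worldline, and then to obtain (b) from (a) together with \AxSTL. For (a), begin by coordinatizing \SPACETIME so that $o$'s future quadrant is \TOP; this is harmless since all the relevant notions are invariant under Lorentz transformations and spatial reflections. Now \AxSameFuture, together with the two observations made just before the theorem, forces a clean dichotomy on the whole model: every observer that moves STL relative to $o$ has future quadrant \TOP, and every observer that moves FTL relative to $o$ has future quadrant \LEFT or \RIGHT --- and since any two of the latter move STL relative to one another, \AxSameFuture makes them all share the \emph{same} horizontal quadrant, which we may take to be \RIGHT (apply the reflection $x \mapsto -x$, which fixes \TOP, if it happens to be \LEFT). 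So in the model every observer's future quadrant is either \TOP or \RIGHT.

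Next comes the one idea that makes the argument work. Let $u = t + x$ be the lightlike coordinate whose gradient points along the up-right null direction. If an observer's future quadrant is \TOP, its future-pointing tangent $(\dot t,\dot x)$ satisfies $\dot t > |\dot x|$; if it is \RIGHT, it satisfies $\dot x > |\dot t|$; in either case $\dot u = \dot t + \dot x > 0$. Moreover \AxCMV, via the Intermediate Value Theorem argument already given, keeps the tangent bounded away from the null directions all along the worldline, so $u$ is in fact \emph{strictly} increasing along every worldline in the direction of that observer's flow of time. Hence ``$e$ chronologically precedes $e'$ according to $o_i$'' --- meaning $o_i$ passes through $e$ before $e'$ as it moves into its future quadrant --- implies $u(e) < u(e')$, no matter which of the two types $o_i$ is. For a putative time-travel configuration $o_1,\dots,o_n$ with events $e_0 = e_n$, chaining these inequalities gives $u(e_0) < u(e_1) < \dots < u(e_n) = u(e_0)$, a contradiction; this proves (a).

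For (b), assume \AxSTL and suppose time travel can be implemented using FTL observers. Then, in particular, time travel is possible, so by the contrapositive of (a) \AxSameFuture fails: there are observers $a$ and $b$ that move STL relative to one another but whose future quadrants differ. Apply \AxSTL to $a$ and the worldline of $b$ (which is STL with respect to $a$): this yields an observer $o_2$ with the same future quadrant as $a$ --- hence the opposite of $b$'s --- and the same range as $b$. The two observers $b$ and $o_2$ thus traverse one and the same non-lightlike curve in opposite temporal directions, so any two distinct events $e, e'$ on that curve are ordered one way by $b$ and the other way by $o_2$. The two-observer sequence $b, o_2$ with events $e, e', e$ is then a time-travel configuration (and $b \neq o_2$, since their future quadrants differ), and $b$ and $o_2$ move STL relative to one another because they have the same range, so this realization of time travel does \emph{not} use FTL observers.

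The step I expect to be the main obstacle is not any computation but the bookkeeping around ``chronologically precedes according to $o_i$'': one must fix its precise meaning for events on a possibly accelerating worldline, check that such worldlines are injective (which follows from the very monotonicity used above) so that the ordering is well defined, and verify that the reductions to future quadrants \TOP and \RIGHT really are without loss of generality. Once the lightlike-coordinate observation is in place, (a) is a one-line telescoping argument and (b) is a short deduction from it and \AxSTL.
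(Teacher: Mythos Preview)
Your proposal is correct and is essentially the paper's own argument in different clothing: the paper measures the signed perpendicular Euclidean distance from each $e_i$ to the ``main axis'' (the up-left null line through $e_{i-1}$), which is precisely $\tfrac{1}{\sqrt{2}}\bigl(u(e_i)-u(e_{i-1})\bigr)$ for your lightlike coordinate $u=t+x$, and then sums these positive increments; you instead differentiate $u$ along worldlines and chain the resulting strict inequalities. Your treatment of~(b) is the same deduction from~(a) and \AxSTL as the paper's, only with the two-observer time-travel configuration spelled out explicitly rather than left to a figure.
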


\begin{proof}
(a) By \AxSameFuture, we may assume, without loss of generality, that the future quadrant of STL observers is \TOP, and that of FTL observers is \RIGHT. (Here,  STL and FTL are understood relative to our fixed observer $o$.) Suppose observers $o_1,\ldots, o_n$ and events $e_0, \dots, e_n$ can be chosen which implement a time travel situation (we will show that this assumption contradicts \AxSameFuture). Let $o_i$ be any of the observers ($1\le i\le n$), and consider the lightcone at the event $e_{i-1}$ on $o_i$'s worldline. This cone comprises two lines, one going `up to the right' (from \BOTTOM-\LEFT to \TOP-\RIGHT), the other going up to the left. We will call the latter the ``main axis'' at $e_{i-1}$ (see figure \ref{fig-proof} for the main axis at $e_0$). We will write $d_i$ to mean the perpendicular (Euclidean) distance from $e_i$ to this main axis, assigning it a positive sign if $e_i$ is above the main axis and a negative sign if it is below. Notice that all main axes are parallel, and that distance is additive (if $a$, $b$ and $c$ are events, the distance from $c$ to $a$'s main axis can be calculated by summing $b$'s distance from $a$'s axis with $c$'s distance from $b$'s axis).
\begin{figure}[!ht]
  \centering
  \includegraphics*{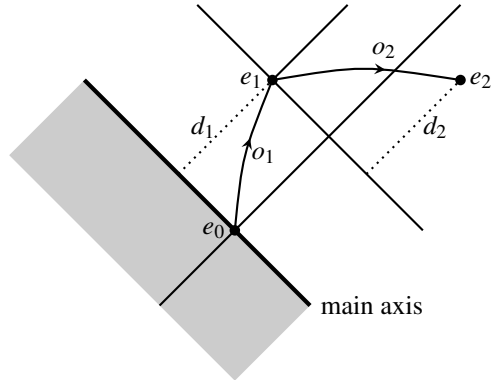}
  \caption{An observer's motion always acts to increase the perpendicular Euclidean distance from the main axis, regardless of whether it is FTL or STL (illustration for the proof of Theorem \ref{thm:main}).}
  \label{fig-proof}
\end{figure}

By assumption, the future quadrant of $o_i$ at $e_{i-1}$ is either \TOP or \RIGHT, from which it follows that $e_i$ must lie in the interior of one of these two quadrants. Both of these quadrants are bounded below/to the left by the main axis, so in each case we will have $d_i > 0$. It follows from the additivity of distances that the perpendicular distance from $e_n$ to $e_0$'s axis is equal to $d_1 + \dots + d_n$, which (being a sum of positive terms) is again positive. But this contradicts the requirement that $e_0 = e_n$ in a time travel scenario. Hence time travel is impossible if \AxSameFuture and \AxCMV are assumed at the same time, as claimed.

(b) If time travel is possible, then (a) tells us that \AxSameFuture must be violated, so there must be two observers traveling slower than light relative to one another who disagree on the direction of time's flow. However, if time flows in different directions for two observers, then time travel is already trivially possible by \AxSTL.
\end{proof}
\bigskip

In the following discussions, by a \emph{model}, we understand \SPACETIME\ together with some distinguished set of worldlines. By an automorphism $P$ of this model, we understand a permutation of \SPACETIME that preserves the absolute value of Minkowski metric as well as distinguished worldlines (\ie $w$ is distinguished if and only if $P\circ w$ is distinguished). We say that Einstein's \emph{Special Principle of Relativity} (\SPR) is satisfied in a model if any inertial observer in this model can be taken to any other by an automorphism (recall that an inertial observer is one with a straight non-light-like worldline).
\bigskip

\noindent {\bf Remark 1.} We can construct a model in which both \AxSameFuture and \AxCMV are true, in which Einstein's \emph{Special Principle of Relativity} (\SPR) is satisfied, and in which all kinds of STL as well as FTL relative motion are possible. In this model, there is no time travel by Theorem~\ref{thm:main}.

Briefly, this model is as follows. Take the (1+1)-dimensional spacetime for special relativity with all kinds of directed worldlines, timelike as well as spacelike, inertial as well as non-inertial. Of these, keep only those worldlines that respect the (absolute value of the) Minkowski metric of \SPACETIME. This way \AxCMV is satisfied in the resulting model. Take any coordinate system (for \SPACETIME), and of these observers, keep only those timelike ones whose future quadrant is \TOP, of the spacelike observers keep only those whose future quadrant is \RIGHT. This way \AxSameFuture is satisfied, too.
This model satisfies \SPR because of the following. Any two spacelike or timelike inertial observers can be taken to each other by a Poincar\'e transformation $P$, and any timelike observer can be taken to a spacelike one by a Poincar\'e transformation composed with the transformation $T$ which just interchanges the first coordinate with the second one in our fixed coordinate system for \SPACETIME. It is easy to check that both $P$ and $T$ are automorphisms of our model.

Let us fix an arbitrary inertial observer; what does the world look like in its coordinate system? (By \SPR, the world looks exactly the same in all other inertial worldviews.) As far as STL observers are concerned, the world is completely normal, it is as we know it from special relativity: moving clocks run slow, etc. But there is also a time-dilation effect for FTL observers: the closer their speed is to the speed of light the slower their clocks run, but the greater their speed is, the faster their clocks run. In addition to this, in one of the two spatial directions (call it $-x$) FTL clocks run backwards.
How can we interpret this phenomenon? If we consider the time orientation of an observer to be the direction in which his own future unfolds, we can say that information can be communicated between events on his worldline only in the direction of this unfolding future (for a discussion of this kind of intuition, see, \eg \cite{New70}).
Expressing this in spacetime terminology, the ``causal past'' of an event $e_0$ in our model is the region ``below'' its main axis (see the shaded area in Figure \ref{fig-proof}). The unusual thing here is that this causal past does not lie entirely within the observer's temporal past (\ie the half-space below the simultaneity).

Keeping this description in mind, let's see how we can resolve the paradox described by Newton \cite{New70} (see Figure~\ref{ccc}). In the model we constructed above, the ``reading'' of Figure~\ref{ccc} is as follows. At event $B$ observer $o_1$ sends out FTL signal $s_1$. This is received by observer $o_2$ at event $C$, who later at event $D$ receives another signal, $s_2$ from $o_1$. (The only unusual thing is that $o_2$ receives the two signals in reverse time-order
from $o_1$'s point of view.) If $o_2$ wishes to send $o_1$ a reply to $s_1$ at event $D$, this reply has to be an STL
signal. \QED

\subsection{The general ($n$+1)-dimensional case}
\label{sec:n+1}

We now turn to generalizing our model to the ($n$+1)-dimensional case, for all $n \ge 1$. We will have to make some adjustments because ($n$+1)-dimensional Minkowski spacetime $\nSPACETIME$ is very different from \SPACETIME when $n > 1$. Here, unlike in $\SPACETIME$, it makes sense to distinguish between STL and FTL motion in absolute terms, because the interior of a lightcone is geometrically (and hence physically) distinguishable from its outside. Because STL inertial motion is geometrically distinguishable from FTL inertial motion, we cannot attach reference frames to FTL objects in a way that respects \SPR. (A strictly axiomatic derivation of this fact can be found, \eg in \cite[Thm.~2.1]{AMNSz12}).
Nonetheless, we will define spacelike smooth functions $w \colon \REAL \to \nSPACETIME$ parametrized by their arc lengths to be worldlines of FTL particles (or signals). Thus, particles can ``move FTL'', they have ``clocks'', but they do not have reference frames attached to them. Timelike smooth functions remain worldlines of observers. With this terminology, our previous definitions of time travel, model, etc., remain meaningful in $\nSPACETIME$ for $n > 1$.
In particular, by a \emph{model}, we understand $\nSPACETIME$ together with some distinguished set of worldlines. By an automorphism of a model, we understand a permutation of $\nSPACETIME$ that respects both the Minkowski metric and distinguished worldlines. By an inertial observer, we mean an observer whose worldline is a timelike straight line. We say that a model satisfies Einstein's \SPR if any inertial observer can be taken to any other by an automorphism. By the \emph{standard model of special relativity}, we understand \nSPACETIME together with all smooth future-directed timelike curves $w \colon \REAL \to \nSPACETIME$ parametrized by their arc lengths as worldlines of observers.

\begin{theorem}\label{thm:mainplus}
There is an enrichment of the ($n$+1)-dimensional standard model of special relativity with FTL particles in which
\begin{itemize}
\item 
  Einstein's \SPR is satisfied;
\item 
  each spacelike line is the worldline of an FTL particle;
\end{itemize}
\noindent but time travel is impossible.
\end{theorem}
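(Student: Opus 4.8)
The plan is to generalize the (1+1)-dimensional construction of Remark~1 by replacing its single ``main axis'' with a fixed future-directed null vector $\ell$ together with a lexicographic tie-breaking rule inside $\ell^\perp$. In the standard coordinates $(x_0,\dots,x_n)$ on $\nSPACETIME$, take $\ell=(1,1,0,\dots,0)$ and put $\phi_1:=x_0-x_1=\langle\ell,\cdot\rangle$ (the analogue of ``signed distance to the main axis''), $\phi_2:=x_2,\ \dots,\ \phi_n:=x_n$. Call the \emph{sign} of a spacelike direction $u$ the sign of the first nonzero entry of $\bigl(\phi_1(u),\dots,\phi_n(u)\bigr)$; this is well defined because the common kernel $\bigcap_j\ker\phi_j=\REAL\ell$ contains no spacelike vector. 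Now define the model: keep every smooth future-directed timelike curve as the worldline of an observer (so the model enriches the standard model of special relativity), and for each spacelike line keep exactly the arc-length parametrization whose direction vector has positive sign, as the worldline of an FTL particle. Then each spacelike line is the worldline of an FTL particle; note we were forced to orient each line, since keeping both senses would let two particles traverse a single spacelike line in opposite directions and hence produce a two-step time-travel cycle.

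To rule out time travel I would iterate the distance argument of Theorem~\ref{thm:main}. The functional $\phi_1$ is strictly increasing along every smooth future-directed timelike curve (its derivative along such a curve is $\phi_1(w'(s))=w_0'(s)-w_1'(s)>0$) and along every distinguished FTL worldline whose direction $u$ has $\phi_1(u)>0$, and it is constant along the remaining FTL worldlines, namely those with $u\in\ell^\perp$. Hence in a hypothetical time-travel cycle with $e_0=e_n$ all the $\phi_1$-increments are nonnegative and sum to zero, so each vanishes; therefore every worldline in the cycle is an FTL particle with direction in $\ell^\perp$, and $e_0,\dots,e_n$ all lie in one translate of $\ell^\perp$. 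Running the same monotonicity argument inside that translate with $\phi_2$, then $\phi_3$, and so on, forces every participating direction $u$ to satisfy $\phi_1(u)=\dots=\phi_n(u)=0$, \ie $u\in\REAL\ell$, which is null --- contradicting that $u$ is a spacelike worldline's direction. So no time-travel cycle exists.

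It remains to verify \SPR, \ie to exhibit automorphisms carrying any timelike straight line to any other. Here I would take the Iwasawa ``$AN$'' subgroup $G_0\le SO^+(n,1)$ attached to $\ell$: $G_0=N\rtimes A$, where $A$ is the one-parameter group of Lorentz boosts scaling $\ell$ by $\lambda\in\REAL$ with $\lambda>0$, and $N\cong\REAL^{n-1}$ is the abelian group of horospherical (unipotent) maps fixing $\ell$. A short computation gives $\phi_1\circ P_\lambda=\lambda^{-1}\phi_1$ and $\phi_1\circ N_b=\phi_1$, while $\phi_j\circ P_\lambda=\phi_j$ and $\phi_j\circ N_b=\phi_j+b_j\phi_1$ for $j\ge2$. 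Consequently each element of $G_0$ preserves the sign of every spacelike direction (the sign of $\phi_1$ is preserved, and on $\ker\phi_1$ each $\phi_j$ is preserved exactly), so $G_0$ together with all spacetime translations consists of automorphisms of our model. Since $AN$ acts (simply) transitively on $H^n=SO^+(n,1)/SO(n)$ by the Iwasawa decomposition $SO^+(n,1)=KAN$, $G_0$ is transitive on future-timelike directions; adjoining translations makes the group transitive on timelike straight lines, which is \SPR.

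The crux --- and the step I expect to be the main obstacle --- is exactly the tension this construction resolves. For $n>1$ one cannot orient the spacelike lines invariantly under the full proper orthochronous Lorentz group, since that group is transitive on spacelike directions and so admits no proper nonempty invariant ``half''; yet \SPR forces the automorphism group to be transitive on timelike directions. The non-reductive subgroup $AN$ is what threads this needle: it is still transitive on $H^n$, but it stabilizes $\ell^\perp$ and acts trivially on $\ell^\perp/\REAL\ell$, so it does preserve an orientation of the spacelike lines. A secondary wrinkle, absent when $n=1$, is that a single linear functional no longer orients all spacelike lines (those lying in $\ell^\perp$ are invisible to $\phi_1$), which is why the lexicographic refinement by $\phi_2,\dots,\phi_n$ and the iterated monotonicity argument are needed.
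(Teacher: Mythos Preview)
Your construction coincides with the paper's: both orient every non-lightlike line by a lexicographic rule headed by a fixed null direction (your $\ell$, the paper's $g$) followed by spacelike directions spanning $\ell^\perp/\REAL\ell$ (the paper's $h,s$; your $\phi_2,\dots,\phi_n$), and both exclude time travel via the additivity of the resulting positive cone --- the paper phrases this as ``$C$ is closed under addition and excludes $0$'', which is exactly your iterated monotonicity. Two differences are worth noting. First, the paper admits as FTL worldlines \emph{all} smooth spacelike curves with tangent in $C$, not only straight lines; this obliges it to prove $w(b)-w(a)\in C$ for curved $w$ by an induction on the dimension, whereas your restriction to lines makes that step trivial (both models satisfy the theorem's hypotheses, so nothing is lost). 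Second, your verification of \SPR via the Iwasawa $AN$ subgroup is cleaner than the paper's: the paper attaches to each timelike line $\ell$ an orthonormal frame $(\ell,x,y,z)$ with $x,y,z$ lying in the planes $\ell g,\ell h,\ell s$, takes $P$ to be the Poincar\'e map between two such frames, and then argues (a bit tersely) that $P$ preserves the planes parallel to $gh$ and $gs$ and hence the orientation. Your calculation that $A$ rescales $\phi_1$ and fixes each $\phi_j$, while $N$ fixes $\phi_1$ and shifts $\phi_j$ by a multiple of $\phi_1$, makes orientation-preservation transparent, and transitivity on the hyperboloid then follows in one line from $KAN$. In effect the paper's frame map \emph{is} an element of $AN$ (up to translation), but your packaging explains simultaneously why such a map exists and why it respects the lexicographic cone.
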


\begin{proof}
For simplicity, we only prove the $n=3$ case because it is straightforward to generalize this proof for arbitrary $n$. Informally, the construction is a generalization of the (1+1)-dimensional one outlined in Remark 1, in which we replace the light-like line $\{ (t,-t) : t\in\REAL\}$ (the main axis) with a Robb hyperplane $R$ (namely, the hyperplane Minkowski-orthogonal to this light-like line), and we orient the lines within the Robb hyperplane so that there cannot be any time-travel within it. Then we define the orientation on a line $\ell$ such that ``it goes from below $R$ to above it'' if it is not parallel to $R$, and otherwise we copy the orientation of $R$ to the line $\ell$.
\begin{figure}
  \centering
  \includegraphics*{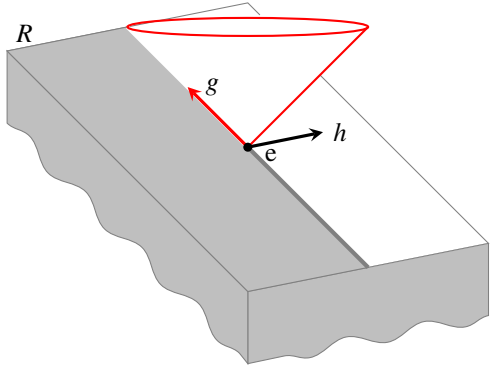}
  \caption{The construction in the proof of Theorem~\ref{thm:mainplus} for (2+1)-dimensions.}
  \label{FigureB}
\end{figure}

We write down the construction formally, writing $\mu$ for the ((3+1)-dimensional) Minkowskian scalar product. For fixed vector $u$, the set $\{ v : \mu(u,v)=0\}$ is a hyperplane (just as in the Euclidean case), and $\{ v : \mu(u,v)>0\}$ is an open half-space -- we can think of it as the half of the space ``above'' the hyperplane.

To formalize our construction, let $g$, $h$, $s$ be three mutually orthogonal vectors (\ie $\mu(g,h)=\mu(g,s)=\mu(h,s)=0$) such that $g$ is light-like, and $h$, $s$ are spacelike.  The Robb hyperplane in our construction will be determined by $g$, and we will use $h$, $s$ for giving a `good' orientation in this Robb hyperplane. We define the set $C$ of \emph{positive vectors} as follows: we define $v$ to be positive ($v\in C$) if and only if
\begin{description}
\item[] $\mu(g,v)>0$, \quad or
\item[] $\mu(g,v)=0$ and $\mu(h,v)>0$, \quad or
\item[] $\mu(g,v)=\mu(h,v)=0$ and $\mu(s,v)>0$, \quad or
\item[] $v=\lambda g$ for some $\lambda>0$.
\end{description}
Intuitively, a vector is positive if and only if it (points from the origin to a point that) is above the Robb hyperplane determined by $g$, or it is in the open half of this Robb hyperplane which is above the hyperplane determined by $h$, or it is in the half of the intersection of the Robb hyperplane and the one determined by $h$ which is above the hyperplane determined by $s$, or it lies on the half-line determined by $g$. See Figure~\ref{FigureB}.

We include in our model as worldlines exactly those smooth functions $w \colon \REAL \to \tSPACETIME$ whose tangent vectors are positive in the above-defined sense, and which are parametrized by their arc lengths. For example, each spacelike straight line is a worldline, as it is not difficult to see that $C$ together with its central inversion $-C$ cover the whole space except the zero-vector.

To check that \SPR holds, we have to find an automorphism $P$ for any two timelike lines $\ell$ and $\ell'$ which takes $\ell$ to $\ell'$. Take any point $p$  on $\ell$. Let $x$, $y$ and $z$ be the lines going through $p$, orthogonal to $\ell$, and lying in the planes  $\ell g$, $\ell h$ and $\ell s$ respectively (where we write $\ell v$ for the plane that is parallel to vector $v$ and contains $\ell$ if there is only one such plane, \ie if $\ell$ and $v$ are not parallel); define $x'$, $y'$ and $z'$ analogously to be lines orthogonal to $\ell'$ in the planes $\ell' g$, $\ell' h$ and $\ell' s$. Then the lines $\ell$, $x$, $y$ and $z$ are mutually orthogonal, as are the lines $\ell'$, $x'$, $y'$ and $z'$. Now, let $P$ be the Poincar\'e transformation that takes $\ell$, $x$, $y$, $z$ to  $\ell'$, $x'$, $y'$, $z'$ and respects the orientation determined by $C$ (such a $P$ clearly exists). This $P$ respects the  Minkowski metric since it is a Poincar\'e transformation. It also respects time-orientation because it takes any plane parallel to the plane $gh$ to another such plane (because $gh$, $xy$ and $x'y'$ are all the same plane), and similarly for planes parallel to $gs$.

Now, time travel is not possible in this model for essentially the same reason it is not possible in our (1+1)-dimensional model. For suppose we have a time-travel scenario linking the events $e_0, \dots, e_n$. We will show that if $w$ is any observer/particle, then $w(b) - w(a) \in C$ whenever $a < b$. It will then follow that each $e_{i+1} - e_i$ is in $C$, and hence (because $C$ is closed under vector addition) that $e_n - e_0 = (e_n - e_{n-1}) + (e_{n-1} - e_{n-2}) + \dots + (e_1 - e_0)$ is in $C$. Since $C$ does not contain the zero vector, it follows that $e_n \neq e_0$, contradicting the definition of a time-travel scenario, and the result follows.  

It remains to show that $w(b) - w(a) \in C$ whenever $a < b$ and $w$ is one of the worldlines included in our model. We can assume without loss of generality that $a = 0$, and we know that the ``component of $w$ orthogonal to $R$'' is non-decreasing (its derivative is non-negative by positivity of tangent vectors). Consequently, it is enough to prove that for all $v > 0$: if $w(u) \in \nSPACETIME$ whenever $0 < u < v$, then $w(u) \in C$ whenever $0 < u < v$. We will argue by induction on $n$. We have already shown the result to be true in the (1+1)-dimensional case, \ie for $n = 1$, so let us assume that the result is valid in ($k$+1)-dimensional spacetime for some $k \geq 1$, and consider the case $n = k+1$. Choose any $b > 0$. If the component of $w$ orthogonal to $R$ is non-constant on $[0,b]$, then $w(b)$ is obviously in $C$. On the other hand, if this component is constant on $[0,b]$, then $w(t)$ lies entirely in the ($k$+1)-dimensional space $R$ for all $t \in [0,b]$, and the result follows by induction. 
\end{proof}
\bigskip

\noindent {\bf Remark 2.} We describe how the world looks in the coordinate system $F_o$ of an arbitrary STL inertial observer $o$ in the (3+1)-dimensional model constructed in the proof of Theorem~\ref{thm:mainplus}. From the point of view of STL motion everything is normal in this world view, \ie everything is as special relativity prescribes. The novelty is that there are particles moving FTL, and they have ``clocks'' inducing a time orientation on their worldlines. Following Newton~\cite{New70}, let us call them signals (objects able to carry information in the direction of their time orientation). Some of these signals are such that their time orientation is opposite to the one of our chosen coordinate system $F_o$. Seen from this reference frame, information flows backwards in these signals. Or, in other words, the clock of this FTL signal runs backward as seen by $F_o$. Newton~\cite{New70} explains how this may be possible, and he even gives clues for how one could detect such a backward information-flow experimentally. Let us call such signals \emph{inverse} signals.

We now describe how FTL signals behave in terms of \emph{space} and \emph{motion}, in our coordinate system $F_o$. We will see that the inverse signals distinguish three pairwise orthogonal directions $x$, $y$, $z$ in space. First, there is a unique spatial direction, call it $-x$, in which there is a maximal effect of backward-flow: in direction $-x$ each FTL signal is an inverse one (this means that all FTL signals on a line parallel to $x$ go in direction $x$). On the other hand, in spatial directions orthogonal to $x$ there is a minimal effect of backward-flow: only an infinitely rapid signal can be an inverse one. What happens in between? In any other direction $u$, there is a threshold FTL speed $v$ such that backward-flow can occur only in one of the directions $u$ and $-u$, and in this direction exactly the signals faster than $v$ are inverse ones. In fact, there is a simple formula for the threshold speed in direction $u$: we may assume that $|u|=1$ and also $|x|=1$, and now the threshold speed in direction $u$ is $-1\slash (u \cdot x)$, where $u \cdot x$ denotes the Euclidean scalar product of $u$ and $x$ and we have taken the speed of light to be 1. Figure~\ref{FigureA} provides an illustration for the threshold velocity in direction $u$.
\begin{figure}[!ht]
  \centering
  \includegraphics*{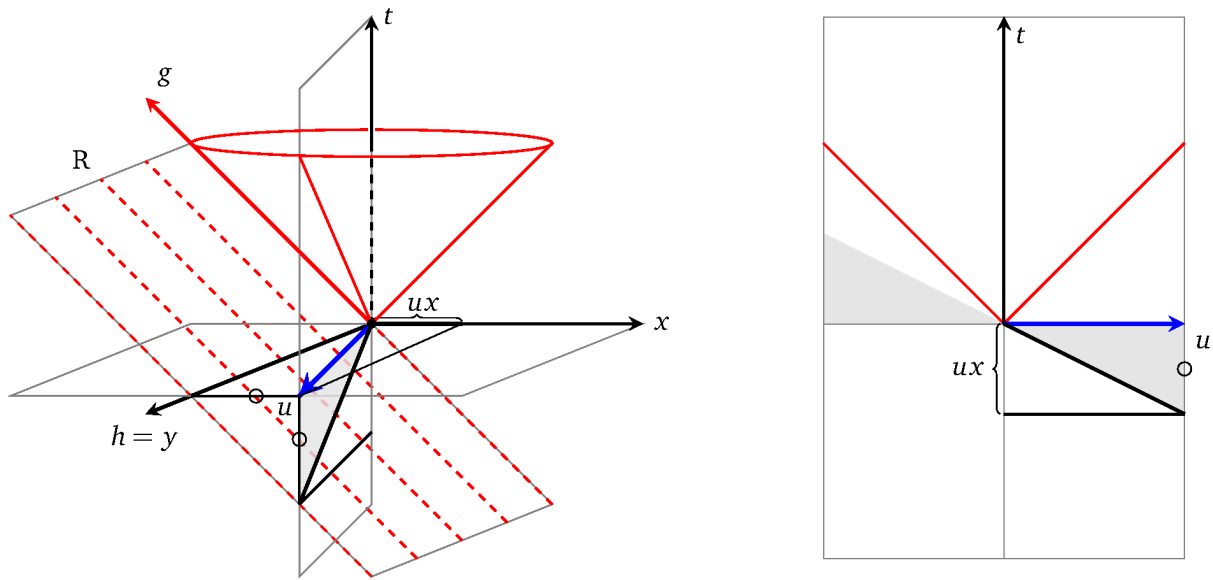}
  \caption{The threshold velocity in direction $u$.} 
  \label{FigureA}
\end{figure}

To complete the description of FTL signals in $F_o$, it only remains to describe the time orientations of infinite-speed clocks in directions orthogonal to $x$.  Let $S$ denote the spatial plane orthogonal to $x$. Lines in $S$ are directed as in our construction: there are two orthogonal vectors in $S$, call them $y$ and $z$, and lines in $S$ are directed to go from one half-plane determined by $y$ to the other, while lines parallel to $z$ are directed to go in direction $z$.

We can summarize this picture as saying that Einstein's simultaneity is an appropriate one for STL motion, while for FTL motion the Robb hyperplane of our construction seems more appropriate as a simultaneity. \QED

\section{Concluding remarks}
\label{sec:concluding-remarks}

In this paper, we have exhibited a model of \nSPACETIME which is
inhabited by observers/particles moving at all (non-light)
speeds relative to one another, but in which time travel is not
possible -- it follows that the existence of FTL
signals \emph{does not} logically entail the
existence of `time travel' scenarios. Nor, therefore, does it
inevitably lead to the causality paradoxes arising from those
scenarios. Our model is, moreover, physically sensible since it
satisfies Einstein's Special Principle of Relativity -- each 
inertial observer can be taken to any other by
an automorphism.

On the other hand, it is certainly possible to construct logically
sensible models of spacetime in which observers can disagree as to
the direction of time's arrow. Our results do not undermine those
constructions, but they do force us to re-examine which aspects of
these models are actually responsible for any apparent paradoxes.

\section*{Acknowledgement}
This work is supported under the Royal Society International
Exchanges Scheme (reference IE110369) and by the Hungarian
Scientific Research Fund for basic research grants No.~T81188 and
No.~PD84093, as well as by a Bolyai grant for Madar\'asz. This work
was partially completed while Stannett was a visiting fellow at the
Isaac Newton Institute for Mathematical Sciences, under the program
\textit{Semantics \& Syntax: A Legacy of Alan Turing}. 
The authors are grateful to the editor and anonymous referees for their care, comments and suggestions.

\bibliographystyle{spmpsci}   
\bibliography{Revised-FTLCQG}

\end{document}